\documentclass[11pt]{article}

\usepackage{color,amsthm,amsmath,amsfonts,caption,tikz}

\usepackage{hyperref}
\usepackage{fullpage}
\usepackage[toc,page]{appendix}
\usepackage{authblk}

\title{
Fundamental Limits of Universal Variable-to-Fixed Length Coding of Parametric Sources \thanks{This research was funded in part by the
  NSF under grant No. CCF-1422358.}}
\author[1]{Nematollah Iri\thanks{ niri1@asu.edu}}
\author[2]{Oliver Kosut\thanks{okosut@asu.edu}}

\affil[1,2]{Arizona State University}

\newtheorem{Theorem}{Theorem}
\newtheorem{Corollary}{Corollary}

\newtheorem{Example}{Example}
\newtheorem{Lemma}{Lemma}
\newtheorem{Remark}{Remark}

\begin{document}

\maketitle

\begin{abstract}
Universal variable-to-fixed (V-F) length coding of $d$-dimensional exponential family of distributions is considered. We propose an achievable scheme consisting of a dictionary, used to parse the source output stream, making use of the previously-introduced notion of \emph{quantized types}. The quantized type class of a sequence is based on partitioning the space of minimal sufficient statistics into cuboids. Our proposed dictionary consists of sequences in the boundaries of transition from low to high quantized type class size. We derive the asymptotics of the $\epsilon$-coding rate of our coding scheme for large enough dictionaries. In particular, we show that the third-order coding rate of our scheme is $H\frac{d}{2}\frac{\log\log M}{\log M}$, where $H$ is the entropy of the source and $M$ is the dictionary size. We further provide a converse, showing that this rate is optimal up to the third-order term.
\end{abstract}
\section{Introduction}
A variable-to-fixed (V-F) length code consists of a dictionary of pre-specified size. Elements of the dictionary (segments) are used to parse the infinite sequence emitted from the source. Segments may have variable length, however they are encoded to the fixed-length binary representation of their indices within the dictionary. In order to be able to uniquely parse any infinite length sequence into the segments, we assume the dictionary to be complete (i.e. every infinite length sequence has a prefix within the dictionary) and proper (i.e. no segment is a prefix of another segment). The underlying source model induces a distribution on the segment lengths. The segment length distribution reflects the quality of the dictionary for the compression task.

For a given memoryless source, Tunstall \cite{tunstall} provided an average-case optimal algorithm to maximize average segment length. A central limit theorem for the Tunstall algorithm's code length has been derived in \cite{drmota}. In most applications, however, statistics of the source are unknown or arduous to estimate, especially at short blocklengths, where there are limited samples for the inference task. In universal source coding, the underlying distribution in force is unknown, yet belongs to a known collection of distributions. Universal V-F length codes are studied in e.g. \cite{krichevsky, Lawrence, tjalkens, viswer}. Upper and lower bounds on the \emph{redundancy} of a universal code for the class of \emph{all} memoryless sources is derived in \cite{krichevsky}. Universal V-F length coding of the class of all binary memoryless sources is then considered in \cite{Lawrence, tjalkens}, where \cite{tjalkens} provides an asymptotically average sense optimal\footnote{Throughout, ``optimality'' of an algorithm is considered only up to the model cost term (i.e. the term reflecting the price of universality) in the coding rate. The model cost term is the second-order term in the average case analysis, while it is the third-order term in the probabilistic analysis.} algorithm. Later, optimal redundancy for V-F length compression of the class of Markov sources is derived in \cite{viswer}. Performance of V-F length codes and fixed-to-variable (F-V) length codes for compression of the class of Markov sources is compared in \cite{merhavVFvsFV} and a dictionary construction that asymptotically achieves the optimal error exponent is proposed.

All previous works consider model classes that include all distributions within a simplex. However, universal V-F length coding for \emph{more structured} model classes has not been considered in the literature. Apart from extending the topological complexities, we further adopt more general metrics of performance. Delay-sensitive modern applications reflect new requirements on the performance of compression schemes. Therefore it is vital to characterize the overhead associated with operation in the non-asymptotic regime. Over the course of probing the non-asymptotics, incurring ``errors'' are inevitable. Therefore, we depart from classical average-case (redundancy) and worst case (regret) analysis to the modern probabilistic analysis, where the figure of merit in our setup is the $\epsilon$-coding rate --- the minimum rate such that the corresponding overflow probability is less than $\epsilon$. Our goal is to analyze asymptotics of the $\epsilon$-coding rate as the size of the dictionary increases. We provide an achievable scheme for compressing $d$-dimensional exponential family of distributions as the parametric model class. Moreover, we provide a converse result, showing that our proposed scheme is optimal up to the third-order $\epsilon$-coding rate.

In previous universal V-F length codes, one can define a notion of complexity for sequences. In \cite{krichevsky, Lawrence, tjalkens, viswer}, a sequence with high complexity has low probability under a certain composite or mixture source. While in \cite{merhavVFvsFV}, high complexity sequences have high scaled (by sequence length) empirical entropy. The dictionary of such algorithms then consists of sequences in the boundaries of transition from low complexity to high complexity. We follow a similar complexity theme to design the dictionary. The sequence complexity in our proposed algorithm is characterized based on the sequence's type class size, hence we name our scheme the \emph{Type Complexity} (TC) code. Scaled empirical entropy \cite{merhavVFvsFV} is ignorant of the underlying structure of the parametric class. Therefore, in order to \emph{fully} exploit the inherited structure of the model class, we characterize type classes based on quantized types, which we introduced in \cite{nematArxiv,nemat2} in studying F-V length compression. We partition the space of minimal sufficient statistics into cuboids, and define two sequences to be in the same quantized type class if and only if their minimal sufficient statistic falls within the same cuboid.

The type class approach has been taken before for the compression problem in \cite{oliver}. The Type Size code (TS code) is introduced in \cite{oliver} for F-V length compression of the class of all stationary memoryless sources, in which sequences are encoded in increasing order of type class sizes. The exquisite aspect of this approach is the freedom in defining types. In fact, for F-V length coding, any universal one-to-one compression algorithm can be considered as a TS code with a proper characterization of types \cite{niriCiss18}. In \cite{nematArxiv}, we considered universal F-V length source coding of parametric sources. We have shown \cite{nematArxiv} that the TS code using quantized types achieves optimal coding rate for F-V length compression of the exponential family of distributions.

In this work, we provide a performance guarantee for V-F length compression of the exponential family using our proposed Type Complexity code. We upper bound the $\epsilon$-coding rate of the quantized type implementation of the Type Complexity code by
\begin{equation}
\label{thisEquation}
H+\sigma\sqrt{\frac{H}{\log{M}}}Q^{-1}(\epsilon)+H\frac{d}{2}\frac{\log\log{M}}{\log{M}}+\mathcal{O}\left(\frac{1}{\log{M}}\right)
\end{equation}
where $H,\sigma^2$ are the entropy and the varentropy of the underlying source, respectively, $M$ is the pre-specified dictionary size, $Q(\cdot)$ is the tail of the standard normal distribution, and $d$ is the dimension of the model class. We then provide a converse result showing that this rate is optimal up to the third-order term. Our converse proof relies on the construction of a F-V length code from a V-F length code presented in \cite{merhavVFvsFV}, along with a converse result for F-V length prefix codes \cite{kosutJournal}.

Comparing the third-order term in (\ref{thisEquation}) with Rissanen's \cite{rissanenUIPE} redundancy $\frac{d}{2}\frac{\log{n}}{n}$ for F-V length codes, where $2^n$ denotes the fixed number of codewords in the F-V length code and plays the role of $M$ (fixed number of segments in the V-F length code), we observe that for binary memoryless sources, the optimal V-F length code provides better convergence for the model cost term than the F-V length codes, while for sources with $H>1$, the optimal F-V length code trumps the V-F length codes from the perspective of model cost effects. On the other hand, comparing the dispersion term in (\ref{thisEquation}) with the dispersion of the optimal F-V length code \cite{nematArxiv}, which is $\frac{\sigma}{\sqrt{n}}Q^{-1}(\epsilon)$, we observe that the optimal V-F length code provides better dispersion for binary memoryless sources, while for sources with $H>1$, optimal F-V length code provides better dispersion effects.

The rest of the paper is organized as follows: In Sec. \ref{sec::ProbState}, we introduce the exponential family, V-F length coding and related definitions. In Sec. \ref{sec::QUanTyp}, we reproduce the characterization of quantized types from \cite{nematArxiv}. Type Complexity code is presented in Sec. \ref{sec::Algorithm}. Main result of the paper is stated in Sec. \ref{sec::MainRwes}. We present preliminary results in Sec. \ref{sec::Prelim}. The Achievability and the converse results are proved in Sec.'s \ref{sec::Achiev} and \ref{sec::Converse}, respectively. We conclude in Sec. \ref{sec::cncld}.

\section{Problem Statement}
\label{sec::ProbState}
Let $\Theta$ be a compact subset of $\mathbb{R}^d$. Probability distributions in an exponential family can be expressed in the form
\begin{equation}
\label{pTheta}
p_{\theta}(x)=2^{\left\langle\theta,\boldsymbol{\tau}(x)\right\rangle - \psi(\theta)}
\end{equation}
where $\theta\in\Theta$ is the $d$-dimensional parameter vector, $\boldsymbol{\tau}(x): \mathcal{X}\rightarrow \mathbb{R}^d$ is the vector of sufficient statistics and $\psi(\theta)$ is the normalizing factor. Let the model class $\mathcal{P}=\left\{p_{\theta},\theta\in\Theta\right\}$, be the exponential family of distributions over the finite alphabet $\mathcal{X}=\left\{1,\cdots,|\mathcal{X}|\right\}$, parameterized by $\theta\in\Theta\subset \mathbb{R}^d$, where $d$ is the degrees of freedom in the minimal description of $p_{\theta}\in\mathcal{P}$, in the sense that no smaller dimensional family can capture the same model class. The degrees of freedom turns out to characterize the richness of the model class in our context. Compactness of $\Theta$ implies existence of uniform bounds $0<p_{\min},p_{\max}<1$ on the probabilities, i.e.
\begin{equation}
p_{\min}\leq p_{\theta}(x)\leq p_{\max} \hspace{0.25in}\forall \theta\in\Theta ,\forall x\in\mathcal{X}.
\end{equation}
Let $X^\infty$ be the infinite length sequence drawn $i.i.d.$ from the (unknown) true model $p_{\theta^*}$. From (\ref{pTheta}), the probability of a sequence $x^{\ell}=x_1\cdots x_{\ell}$ drawn $i.i.d.$ from a model $p_{\theta}\in\mathcal{P}$ in the exponential family takes the form \cite{merhav}
\begin{align}
p_{\theta}(x^{\ell})&=\prod_{i=1}^{{\ell}}{p_{\theta}(x_i)} \nonumber \\
               &= \prod_{i=1}^{{\ell}}{2^{\langle\theta,\boldsymbol{\tau}(x_i)\rangle-\psi(\theta)}} \nonumber \\
               &=2^{\ell\left[\langle\theta,\boldsymbol{\tau}(x^{\ell})\rangle-\psi(\theta)\right]}
\end{align}
where
\begin{equation}
\boldsymbol{\tau}(x^{\ell})=\frac{\sum_{i=1}^{{\ell}}\boldsymbol{\tau}(x_i)}{{\ell}}\in\mathbb{R}^{d}
\end{equation}
is a minimal sufficient statistic \cite{merhav}. Note that $\boldsymbol{\tau}(x)$ and $\boldsymbol{\tau}(x^{\ell})$ are distinguished based upon their arguments. We denote $\mathbb{P}_{\theta}$, $\mathbb{E}_{\theta}$ and $\mathbb{V}_{\theta}$ as the probability, expectation and variance with respect to $p_{\theta}$, respectively. We denote the set of all finite length sequences over $\mathcal{X}$ as $\mathcal{X}^{*}$. We denote the generic source sequence of unspecified length as $x^*\in\mathcal{X}^*$. Let $x^{\ell}x^{\ell'}$ be the concatenation of $x^{\ell}$ and $x^{\ell'}$. All logarithms are in base 2. For a set $\mathcal{B}$, $|\mathcal{B}|$ denotes its size. Instead of introducing different indices for every new constant $C_1, C_2, ...$, the same letter $C$ may be used to denote different constants whose precise values are irrelevant.

A V-F length code consists of a parsing dictionary $\mathcal{D}$ of a pre-specified size $|\mathcal{D}|=M$, which is used to parse the source sequence. Elements of the dictionary (segments), which we denote by $\{x_1^*,\cdots,x_M^*\}$, may have different lengths. Once a segment $x^*\in\mathcal{D}$ is identified as a parsed sequence, it is then encoded to its lexicographical index within $\mathcal{D}$ using $\log{M}$ bits. As it does not hurt our analysis, we ignore rounding $\log{M}$ to its closest integer.

We assume $\mathcal{D}$ is complete, i.e. any infinite length sequence over $\mathcal{X}$ has a prefix in $\mathcal{D}$. In addition, we assume $\mathcal{D}$ is proper, i.e. there are no two segments where one is a prefix of the other. Completeness along with properness of $\mathcal{D}$ implies that any long enough sequence has a unique prefix in the dictionary. Every complete and proper dictionary can be represented with a rooted complete $|\mathcal{X}|$-ary tree in which every internal node has $|\mathcal{X}|$ child nodes. Let us label each of the $|\mathcal{X}|$ edges branching out of an internal node with different letters from $\mathcal{X}$. Each node corresponds to the sequence of edge-labels from the root to the node. One can then correspond internal nodes of the tree to the prefixes of the segments, while leaf nodes correspond to the segments.

Let $\mathcal{D}$ be the dictionary of a V-F length code $\phi$. Let $X^*\in\mathcal{D}$ be the \emph{random} first parsed segment of the source output $X^{\infty}$, using the dictionary $\mathcal{D}$. Let $\ell(X^*)$ be the length of $X^*$. We adopt a one-shot setting and denote
\begin{equation}
\label{OneShotELnmgthEq}
\ell^{\phi}(X^{\infty})=\ell(X^{*}).
\end{equation}
We gauge the performance of V-F length code $\phi$ with a dictionary $\mathcal{D}$ of size $M$, through the $\epsilon$-coding rate given by
\begin{equation}
R_M(\epsilon,\phi,p_{\theta^*}):=\min\left\{R: \:\:\mathbb{P}_{\theta^*}\left(\frac{\log{M}}{\ell^{\phi}(X^{\infty})}\geq R \right)\leq \epsilon\right\}. \label{epsCodRate}
\end{equation}
Our goal is to analyze the behavior of $R_M(\epsilon,\phi,p_{\theta^*})$ for large enough dictionary size $M$.
\begin{Remark}
Optimizing the $\epsilon$-coding rate provides more refined results than optimizing $\frac{\log{M}}{\mathbb{E}_{\theta^*}\left({\ell^{\phi}(X^{\infty})}\right)}$. The latter is done in e.g. \cite{tjalkens}.
\end{Remark}
\section{Quantized Types}
\label{sec::QUanTyp}
We have previously introduced \emph{quantized types} \cite{nematArxiv,nemat2}, the optimal\footnote{Optimality is in the sense that the quantized type class implementation of the TS code achieves the minimum third-order coding rate.} characterization of type classes for the universal F-V length compression of the exponential family. In this section, we briefly review this characterization. In order to define the quantized type class of a sequence $x^{\ell}\in\mathcal{X}^{\ell}$, we cover the convex hull of the set of minimal sufficient statistics $\mathcal{T}=\text{conv}\left\{\boldsymbol{\tau}(x): x\in\mathcal{X}\right\}$, into $d$-dimensional cubic grids --- cuboids --- of side length $\frac{W}{\ell}$, where $W>0$ is a constant. The union of such disjoint cuboids should cover $\mathcal{T}$. The position of these cuboids is arbitrary, however once we cover the space, the covering is fixed throughout. We represent each $d$-dimensional cuboid by its geometrical \emph{center}. Denote $G(\boldsymbol{\tau}_0)$ as the cuboid with center $\boldsymbol{\tau}_0$. More precisely
\begin{equation}
\label{cuboidEq}
G(\boldsymbol{\tau}_0):= \left\{\boldsymbol{z}+\boldsymbol{\tau}_0 \in \mathbb{R}^d: -\frac{W}{2{\ell}}<z_i\leq \frac{W}{2{\ell}} \mbox{ for } 1\leq i \leq d \right\}
\end{equation}
where  $z_i$ is the $i$-th component of the $d$-dimensional vector $\boldsymbol{z}$.
Let $\boldsymbol{\tau}_c(x^{\ell})$ be the center of the cuboid that contains $\boldsymbol{\tau}(x^{\ell})$.

We then define the quantized type class of $x^{\ell}$ as
\begin{equation}
\label{typeClassDefEq}
T_{x^{\ell}}:=\left\{y^{\ell}\in\mathcal{X}^{\ell}: \boldsymbol{\tau}(y^{\ell})\in G\left(\boldsymbol{\tau}_c(x^{\ell})\right)\right\}
\end{equation}
the set of all sequences $y^{\ell}$ with minimal sufficient statistic belonging to the very same cuboid containing the minimal sufficient statistic of $x^{\ell}$ (See Figure \ref{fig::TClassPar}). We denote $\mathcal{T}_{\ell}=\left\{T_{x^{\ell}}:x^{\ell}\in\mathcal{X}^{\ell}\right\}$ as the set of all quantized type classes for sequences of length $\ell$.
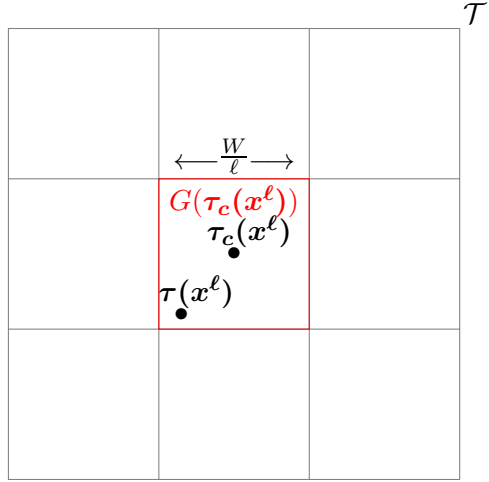
\begin{figure}
\centering
\captionsetup{justification=centering}
\begin{tikzpicture}
\draw[step=2cm,color=gray] (0,0) grid (6,6);
\node at (3,4.3) {$\frac{W}{\ell}$};
\node at (2.5,4.2) {$\longleftarrow$};
\node at (3.5,4.2) {$\longrightarrow$};
\node at (6.2,6.2) {$\mathcal{T}$};
\node at (2.3,2.2) {$\bullet$};
\node at (2.5,2.5) {$\boldsymbol{\tau(x^{\ell})}$};
\node at (3,3) { $\bullet$ } ;
\node at (3.2,3.3) {$\boldsymbol{\tau_c(x^{\ell})}$};
\draw[step=2cm,color=red] (2,2) grid (4,4);
\node at (3,3.7) {{$\color{red}{G(\boldsymbol{\tau_c(x^{\ell})})}$}};
\end{tikzpicture}
\caption{Quantized Types}
\label{fig::TClassPar}
\end{figure}

\section{Type Complexity Code}
\label{sec::Algorithm}
In this section, we propose the Type Complexity (TC) code. Our designed dictionary $\mathcal{D}$, consists of sequences in the boundaries of transition from low quantized type class size to high quantized type class size. More precisely, let $\gamma$ be chosen as the largest positive constant such that the resulting dictionary has at most $M$ segments; we characterize this $\gamma$ precisely in Section \ref{sec::SizeEnf}. The sequence $x^{\ell}=(x_1,x_2,\cdots,x_{\ell})$ is a segment in the dictionary of the TC code if and only if
\begin{equation}
\label{tyCmAlg}
\log{\left|T_{x^{\ell}}\right|}> \gamma \mbox{ and } \log{\left|T_{{x^{\ell}}^{-1}}\right|}\leq\gamma
\end{equation}
where $T_{x^{\ell}}$ is the quantized type class of $x^{\ell}$ as defined in (\ref{typeClassDefEq}) and ${x^{\ell}}^{-1}=(x_1,x_2,\cdots,x_{\ell-1})$ is obtained from $x^{\ell}$ by deleting the last letter.

From construction, it is clear that $\mathcal{D}$ is proper, and furthermore monotonicity of $\log{|T_{x^{\ell}}|}$ in $\ell$ implies completeness of $\mathcal{D}$. Intuitively, sequences with large type class sizes contain more \emph{information}, implying that the TC code compresses more information into a fixed budget of output bits, which is the promise of the optimal V-F length code.

We note that there is a freedom in defining type classes in (\ref{tyCmAlg}). We show that the quantized type is the relevant characterization of type classes for the optimal performance.

\section{Main Result}
\label{sec::MainRwes}
Let $H(p_{\theta})=\mathbb{E}_{\theta}\left(\log{\frac{1}{p_{\theta}(X)}}\right)$ and $\sigma^2(p_{\theta})=\mathbb{V}_{\theta}\left(\log{\frac{1}{p_{\theta}(X)}}\right)$ be the entropy and the varentropy of $p_{\theta}$, repectively. The following theorem exactly characterizes achievable $\epsilon$-rates up to third-order term, as well as asserting that this rate is achievable by the TC code using quantized types.
\begin{Theorem}
\label{mainThm}
For any stationary memoryless exponential family of distributions parameterized by $\Theta$,
\begin{equation}
\inf_{\phi}\sup_{\theta\in\Theta}\left[R_M(\epsilon,\phi,{p_{\theta}})-H(p_{\theta})-\sigma(p_{\theta})\sqrt{\frac{H(p_{\theta})}{\log{M}}}Q^{-1}(\epsilon)-H(p_{\theta})\frac{d}{2}\frac{\log{\log M}}{\log{M}}\right]=o\left(\frac{\log\log M}{\log{M}}\right) \label{mainEq}
\end{equation}
where the infimum is achieved by the TC code using quantized types.
\end{Theorem}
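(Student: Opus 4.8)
The theorem splits into an achievability half --- the TC code with quantized types attains the right side of (\ref{thisEquation}), so $\sup_{\theta}\!\big[R_M(\epsilon,\mathrm{TC},p_\theta)-\cdots\big]\le O(1/\log M)$ --- and a converse half --- every $\phi$ admits a $\theta$ with $R_M(\epsilon,\phi,p_\theta)$ exceeding the three displayed terms by at least $-O(1/\log M)$. Since $O(1/\log M)=o(\log\log M/\log M)$, the two halves squeeze $\inf_\phi\sup_\theta[\cdots]$ to $o(\log\log M/\log M)$, with the infimum attained by the TC code (to which the converse also applies). The engine of achievability is the quantized-type-size estimate of \cite{nematArxiv}: uniformly over $x^\ell$ whose empirical parameter lies in a fixed compact interior subset of $\Theta$,
\[
\log|T_{x^\ell}|=\log\frac{1}{p_{\hat\theta(x^\ell)}(x^\ell)}-\frac{d}{2}\log\ell+O(1)=\ell\,H(p_{\hat\theta(x^\ell)})-\frac{d}{2}\log\ell+O(1),
\]
where $\hat\theta(x^\ell)$ is the parameter matched to the cuboid center $\boldsymbol{\tau}_c(x^\ell)$ and the second equality uses $p_{\hat\theta}(x^\ell)=2^{-\ell H(p_{\hat\theta})}$ for exponential families. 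The first task is to pin down $\gamma$: the TC dictionary is a complete $|\mathcal{X}|$-ary tree, so $M=1+(|\mathcal{X}|-1)\cdot\#\{x^*:\log|T_{x^*}|\le\gamma\}$, and counting internal nodes class-by-class with the estimate above --- the bulk coming, at each length $\ell$ up to a constant multiple of $\log M$, from the $\Theta(\ell^{d-1})$ cuboids straddling the level set $\{\ell H(\cdot)\approx\gamma\}$, each carrying $\approx 2^\gamma$ sequences --- gives $\#\{x^*:\log|T_{x^*}|\le\gamma\}=\Theta(2^\gamma\gamma^d)$, hence
\[
\gamma=\log M-d\log\log M+O(1).
\]
Sequences whose empirical parameter hugs a vertex of $\mathcal{T}$ must be treated separately, but their total count is only $\Theta(2^\gamma\gamma^{d/2})$, a lower-order term; this bookkeeping is inherited from the F-V counting in \cite{nematArxiv}.

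Fix a true parameter $\theta^*$ in the interior and study $L:=\ell^{\phi}(X^\infty)=\min\{\ell:\log|T_{X^\ell}|>\gamma\}$ for the TC code. With $S_\ell:=\sum_{i=1}^\ell\log\frac{1}{p_{\theta^*}(X_i)}$ a random walk of drift $H:=H(p_{\theta^*})$ and step variance $\sigma^2:=\sigma^2(p_{\theta^*})$, Wilks' expansion gives $\log\frac{1}{p_{\hat\theta_\ell}(X^\ell)}=S_\ell-\ell\,D(p_{\hat\theta_\ell}\|p_{\theta^*})=S_\ell-O_p(1)$, so the stopping rule reads $S_\ell>\gamma+\frac{d}{2}\log\ell+O_p(1)$. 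Because the threshold grows only logarithmically, $L$ behaves like the first passage of $S_\ell$ above the constant level $b:=\gamma+\frac{d}{2}\log(\gamma/H)+O(1)$, and renewal theory (with a Berry--Esseen-type rate, available since the steps are bounded) makes $L$ Gaussian with mean $b/H$, variance $\sigma^2 b/H^3$, and an $O(1)$ correction; paths on which the empirical parameter visits $\partial\Theta$, or on which $L$ is far from $\gamma/H$, have probability $e^{-\Omega(\gamma)}$ and are invisible at a fixed $\epsilon$. Hence the $\epsilon$-quantile of $L$ is $q_\epsilon=\frac{b}{H}-\sqrt{\frac{\sigma^2 b}{H^3}}\,Q^{-1}(\epsilon)+O(1)$, and from (\ref{epsCodRate}), $R_M(\epsilon,\mathrm{TC},p_{\theta^*})=\log M/q_\epsilon$. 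Substituting $b=\log M-\frac{d}{2}\log\log M+O(1)$ (via the $\gamma$--$M$ relation) and Taylor-expanding $1/q_\epsilon$ reproduces (\ref{thisEquation}), with all error terms uniform over $\theta\in\Theta$ by continuity of $H,\sigma$ on the compact $\Theta$.

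For the converse, take any V-F code $\phi$ with dictionary $\mathcal{D}$ and internal-node set $\mathcal{D}^-$, $|\mathcal{D}^-|=(M-1)/(|\mathcal{X}|-1)$. The event $\{L\le n\}$ equals $\{X^n\notin\mathcal{D}^-\}$, so $R_M(\epsilon,\phi,p_\theta)=\log M/n^*_\epsilon(\theta)$ with $n^*_\epsilon(\theta)=\max\{n:\mathbb{P}_\theta(X^n\in\mathcal{D}^-_n)\ge 1-\epsilon\}$ and $\mathcal{D}^-_n:=\mathcal{D}^-\cap\mathcal{X}^n$ of size at most $M$. Viewing $\mathcal{D}^-_n$ as a fixed-length codebook, the converse for universal F-V length coding of parametric sources --- in the spirit of the V-F-to-F-V reduction of \cite{merhavVFvsFV} together with \cite{kosutJournal,nematArxiv} --- states that for some $\theta$ no codebook of size $\le M$ can $(1-\epsilon)$-cover $X^n$ once $\log M<nH(p_\theta)+\sigma(p_\theta)\sqrt{n}\,Q^{-1}(\epsilon)+\frac{d}{2}\log n-O(1)$. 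Therefore, for that $\theta$, $n^*_\epsilon(\theta)\le n_\theta$ where $n_\theta$ solves $\log M=n_\theta H(p_\theta)+\sigma(p_\theta)\sqrt{n_\theta}\,Q^{-1}(\epsilon)+\frac{d}{2}\log n_\theta+O(1)$, i.e. $n_\theta=\frac{1}{H(p_\theta)}\big(\log M-\frac{\sigma(p_\theta)}{\sqrt{H(p_\theta)}}\sqrt{\log M}\,Q^{-1}(\epsilon)-\frac{d}{2}\log\log M+O(1)\big)$; hence $R_M(\epsilon,\phi,p_\theta)\ge\log M/n_\theta$, which Taylor-expands to $H(p_\theta)+\sigma(p_\theta)\sqrt{H(p_\theta)/\log M}\,Q^{-1}(\epsilon)+H(p_\theta)\frac{d}{2}\frac{\log\log M}{\log M}-O(1/\log M)$. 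Taking $\sup_\theta$ gives the converse half, and combining with achievability proves (\ref{mainEq}).

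The main obstacle lives inside achievability. Two pieces need genuine care: (i) the internal-node count that fixes $\gamma=\log M-d\log\log M+O(1)$ --- getting the coefficient $d$ right, and controlling the near-vertex sequences --- and (ii) a rigorous, uniform first-passage analysis in which the threshold is both random ($O_p(1)$) and slowly time-varying, the overshoot and the Wilks remainder are statistically coupled to $S_\ell$, and one must verify that none of these corrections nor the atypical paths shift the $\epsilon$-quantile of $L$ by more than $O(1)$; an $O(1)$ error there costs only $O(1/\log M)$ in $R_M$, comfortably inside the $o(\log\log M/\log M)$ budget of (\ref{mainEq}), whereas a naive CLT (error $o(\sqrt{\log M})$ in the quantile) would be too crude.
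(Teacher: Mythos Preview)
Your threshold design and your converse are, up to cosmetic differences, the same as the paper's: the paper also counts dictionary elements to get $|\mathcal{D}|\le C\,2^\gamma\gamma^d$ and hence $\gamma=\log M-d\log\log M+O(1)$, and it also runs the Merhav V-F$\to$F-V reduction and then invokes the F-V prefix converse of \cite{kosutJournal}. (One small correction: that F-V converse carries a $\log\log n/n$ remainder, so after the reduction the paper's converse error is $O(\log\log\log M/\log M)$, not the $O(1/\log M)$ you claim; both are $o(\log\log M/\log M)$, so the theorem is unaffected.)

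Where you genuinely diverge is the achievability coding-rate analysis. You treat $L=\ell^\phi(X^\infty)$ as a first-passage time of $S_\ell=\sum_i\log\frac{1}{p_{\theta^*}(X_i)}$ across a slowly varying, slightly random boundary, and then invoke a renewal/Berry--Esseen argument for the quantile of $L$. This is viable, but every difficulty you list in your last paragraph---overshoot, the $O_p(1)$ Wilks remainder coupled to $S_\ell$, the drifting threshold, uniformity in $\theta$---is real work. The paper avoids all of it with a one-line monotonicity trick: since $|T_{x^\ell}|$ is nondecreasing along prefixes,
\[
\mathbb{P}\!\left(\ell(X^*)<n\right)=\mathbb{P}\!\left(\exists\,\ell<n:\log|T_{X^\ell}|>\gamma\right)\le \mathbb{P}\!\left(\log|T_{X^{n}}|>\gamma\right),
\]
with $n=\lfloor\log M/R\rfloor$. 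Now $n$ is \emph{deterministic}, so Lemma~\ref{TypeClSizeLem} and the bound $p_{\theta^*}(X^n)\le p_{\hat\theta(X^n)}(X^n)$ reduce the right side to $\mathbb{P}(-\log p_{\theta^*}(X^n)>\gamma+\tfrac{d}{2}\log n-C)$, to which the ordinary fixed-$n$ Berry--Esseen bound (Lemma~\ref{maxLikeBerr}) applies directly. Solving the resulting inequality for $R$ gives (\ref{thisEquation}) with an $O(1/\log M)$ remainder, with no stopping-time analysis at all. In short, your route would work but is strictly harder; the paper's monotonicity reduction is the cleaner way to close the achievability half.
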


\begin{Example}
For the class of all binary memoryless sources $d=1$, and the third-order term in (\ref{mainEq}) matches with the optimal redundancy in \cite{tjalkens}.
\end{Example}

\section{Preliminary Results}
\label{sec::Prelim}
Define
\begin{equation}
\label{theThaHatEq}
\hat{\theta}\left(\boldsymbol{\tau}\right)	=\underset{\theta\in\Theta}{\arg\max} \left(\langle\theta,\boldsymbol{\tau}\rangle-\psi(\theta)\right).
\end{equation}
Note that since the Hessian matrix of $\psi(\theta)$, $\boldsymbol{\nabla}^2\left(\psi(\theta)\right)=\text{Cov}_{\theta}\left(\boldsymbol{\tau}(X)\right)$ is positive definite, the log-likelihood function is strictly concave and hence the maximum likelihood $\hat{\theta}(\boldsymbol{\tau})$ is unique.

The following lemma, which is a direct consequence of \cite[Lemmas 1 and 3]{nematArxiv} provides tight upper and lower bounds on the quantized type class size.
\begin{Lemma}
\label{TypeClSizeLem}
Size of the quantized type class of $x^{\ell}$ is bounded as
\begin{equation}
				-\log{p_{\hat{\theta}(x^{\ell})}(x^{\ell})}-\frac{d}{2}\log{\ell}+C_1 \leq \log{|T_{x^{\ell}}|} \leq -\log{p_{\hat{\theta}(x^{\ell})}(x^{\ell})}-\frac{d}{2}\log{\ell}+C_2 \label{TypeClassSize}
\end{equation}
where $C_1, C_2$ are constants independent of $\ell$.
\end{Lemma}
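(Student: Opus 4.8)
The plan is to reduce the two-sided bound to a uniform local central limit estimate for the empirical sufficient statistic; this is the argument underlying \cite[Lemmas 1 and 3]{nematArxiv}. First I would record a convenient identity. When the maximizer in (\ref{theThaHatEq}) lies in the interior of $\Theta$, its first-order condition is $\boldsymbol{\nabla}\psi(\hat{\theta}(x^\ell))=\boldsymbol{\tau}(x^\ell)$, and since $\boldsymbol{\nabla}\psi(\theta)=\mathbb{E}_\theta[\boldsymbol{\tau}(X)]$ this says $\boldsymbol{\tau}(x^\ell)$ is exactly the mean of $\boldsymbol{\tau}(Y)$ under $p_{\hat\theta(x^\ell)}$; consequently $-\log p_{\hat\theta(x^\ell)}(x^\ell)=\ell\bigl[\psi(\hat\theta(x^\ell))-\langle\hat\theta(x^\ell),\boldsymbol{\tau}(x^\ell)\rangle\bigr]=\ell\,H(p_{\hat\theta(x^\ell)})$. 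Because $\boldsymbol{\nabla}^2\psi$ is positive definite, $\hat\theta(\cdot)$ is $C^1$ by the implicit function theorem, so replacing $\boldsymbol{\tau}(x^\ell)$ by the cuboid center $\boldsymbol{\tau}_c(x^\ell)$ --- which differs by $O(1/\ell)$ --- changes $\hat\theta$ by $O(1/\ell)$ and, since the gradient of $\theta\mapsto-\log p_\theta(x^\ell)$ vanishes at the maximizer, changes $-\log p_{\hat\theta(x^\ell)}(x^\ell)$ by only $O(1/\ell)$; one may therefore work with the center throughout. (If $\boldsymbol{\tau}(x^\ell)$ is such that the maximizer is pinned to the boundary of $\Theta$, the empirical mean differs from the attainable set by only $O(1/\ell)$ after quantization, which changes the estimates below by a constant; this boundary bookkeeping is exactly why it is convenient to invoke the cited lemmas.)

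Second, from (\ref{pTheta}) we have $p_\theta(y^\ell)=2^{\ell[\langle\theta,\boldsymbol{\tau}(y^\ell)\rangle-\psi(\theta)]}$, so if $\boldsymbol{\tau}(x^\ell)$ and $\boldsymbol{\tau}(y^\ell)$ both lie in the cuboid $G(\boldsymbol{\tau}_c(x^\ell))$ of side $W/\ell$ (recall (\ref{cuboidEq}), (\ref{typeClassDefEq})), the exponents of $p_{\hat\theta(x^\ell)}(x^\ell)$ and $p_{\hat\theta(x^\ell)}(y^\ell)$ differ by $\ell\langle\hat\theta(x^\ell),\boldsymbol{\tau}(x^\ell)-\boldsymbol{\tau}(y^\ell)\rangle$, whose absolute value is at most $\|\hat\theta(x^\ell)\|_1\,W=O(1)$ by compactness of $\Theta$. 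Hence $p_{\hat\theta(x^\ell)}(y^\ell)$ is within a constant multiplicative factor of $p_{\hat\theta(x^\ell)}(x^\ell)$ for every $y^\ell\in T_{x^\ell}$, which gives
\[
\log|T_{x^\ell}|=\log\mathbb{P}_{\hat\theta(x^\ell)}\!\left(\boldsymbol{\tau}(Y^\ell)\in G(\boldsymbol{\tau}_c(x^\ell))\right)-\log p_{\hat\theta(x^\ell)}(x^\ell)+O(1).
\]
Combined with the first step, the lemma reduces to showing $\log\mathbb{P}_{\hat\theta(x^\ell)}\bigl(\boldsymbol{\tau}(Y^\ell)\in G(\boldsymbol{\tau}_c(x^\ell))\bigr)=-\tfrac{d}{2}\log\ell+O(1)$, uniformly in $x^\ell$.

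Third, I would establish this via a lattice local central limit theorem. The vector $\sum_{i=1}^\ell\boldsymbol{\tau}(Y_i)$ is supported on a translate of a rank-$d$ lattice $L$ generated by the differences $\{\boldsymbol{\tau}(a)-\boldsymbol{\tau}(1):a\in\mathcal{X}\}$ --- of rank $d$ precisely because the family is $d$-dimensional in the minimal sense --- so $\boldsymbol{\tau}(Y^\ell)$ takes values on a translate of $\tfrac1\ell L$, whose fundamental cell has volume $c_0/\ell^d$ for a constant $c_0>0$. The cuboid $G(\boldsymbol{\tau}_c(x^\ell))$ has volume $(W/\ell)^d$ and therefore contains $\Theta(W^d/c_0)=\Theta(1)$ such points; since $\boldsymbol{\tau}_c(x^\ell)$ is within $O(1/\ell)$ of the mean $\boldsymbol{\tau}(x^\ell)$, each of these points is within $O(1/\ell)$ of the mean, i.e.\ well inside the $\Theta(1/\sqrt\ell)$ fluctuation scale, and the local CLT gives its probability mass as $(c_0/\ell^d)\cdot\ell^{d/2}(2\pi)^{-d/2}\bigl(\det\Sigma_{\hat\theta(x^\ell)}\bigr)^{-1/2}(1+o(1))=\Theta(\ell^{-d/2})$, where $\Sigma_\theta=\text{Cov}_\theta(\boldsymbol{\tau}(Y))=\boldsymbol{\nabla}^2\psi(\theta)$. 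Summing over the $\Theta(1)$ points gives $\mathbb{P}_{\hat\theta(x^\ell)}(\boldsymbol{\tau}(Y^\ell)\in G(\boldsymbol{\tau}_c(x^\ell)))=\Theta(\ell^{-d/2})$; substituting into the display above, the explicit two-sided bounds hidden in the $\Theta(\cdot)$ and $O(1)$ estimates furnish the constants $C_1,C_2$ in (\ref{TypeClassSize}).

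The step I expect to be the main obstacle is making the local CLT estimate genuinely \emph{uniform} in $x^\ell$: the implied constants must not degrade as $\hat\theta(x^\ell)$ ranges over $\Theta$, which follows from compactness of $\Theta$ together with continuity and positive-definiteness of $\Sigma_\theta=\boldsymbol{\nabla}^2\psi(\theta)$ and the uniform bounds $p_{\min}\le p_\theta(x)\le p_{\max}$; and the argument must also cover empirical statistics lying at or near the boundary of the attainable mean set, where $\boldsymbol{\tau}(x^\ell)$ need not be the exact mean under any interior parameter. Both of these are precisely the uniform accounting already carried out in \cite[Lemmas 1 and 3]{nematArxiv}, so in the present paper it suffices to cite those results.
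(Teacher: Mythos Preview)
The paper does not prove this lemma at all: it simply states that the bound is ``a direct consequence of \cite[Lemmas~1 and~3]{nematArxiv}'' and moves on. Your proposal goes further, reconstructing the argument that presumably appears in that reference --- equiprobability (up to a constant factor) of sequences within a quantized type class under the ML distribution, reduction to a cuboid probability, and a lattice local CLT centered at the empirical mean --- and this is indeed the standard and correct route to such type-class-size estimates for exponential families. Your final paragraph already recognizes that in the present paper a citation suffices, which is exactly what the authors do.

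One small remark on your sketch: for the lower bound you do not actually need to count lattice points in the cuboid or worry about commensurability of $W/\ell$ with the lattice spacing; the single lattice point $\boldsymbol{\tau}(x^\ell)$ is in $G(\boldsymbol{\tau}_c(x^\ell))$ by construction and is exactly the mean under $p_{\hat\theta(x^\ell)}$, so the local CLT already gives its mass as $\Theta(\ell^{-d/2})$. The counting is only needed for the upper bound, where it is straightforward since the cuboid volume and the fundamental cell volume are both $\Theta(\ell^{-d})$.
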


The type class size bounds in the previous lemma are springboards to the following upper bound on the lengths of the dictionary segments.
\begin{Corollary}[Segment Length]
There exists a positive constant $C_3>0$, such that for any $x^{\ell}\in\mathcal{D}$, we have
\begin{equation}
\label{LinearEll}
 \ell \leq C_3\gamma.
\end{equation}
\end{Corollary}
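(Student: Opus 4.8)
The plan is to extract the bound from the \emph{second} defining condition of the dictionary, $\log\left|T_{{x^\ell}^{-1}}\right|\le\gamma$ in (\ref{tyCmAlg}), combined with the \emph{lower} bound on the quantized type class size from Lemma \ref{TypeClSizeLem}. Intuitively, a segment $x^\ell\in\mathcal{D}$ has a predecessor ${x^\ell}^{-1}=x^{\ell-1}$ whose quantized type class is small (at most $2^\gamma$), while the size of the quantized type class of \emph{any} length-$(\ell-1)$ sequence is at least roughly linear in $\ell-1$; hence $\ell$ cannot exceed a constant multiple of $\gamma$. The other condition, $\log\left|T_{x^\ell}\right|>\gamma$, together with the upper bound in Lemma \ref{TypeClSizeLem}, only yields a \emph{lower} bound on $\ell$, so it is the predecessor condition that does the work here.

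Concretely, I would first apply Lemma \ref{TypeClSizeLem} to the length-$(\ell-1)$ sequence $x^{\ell-1}={x^\ell}^{-1}$ (assume $\ell\ge 2$; the case $\ell=1$ is trivial), giving
\[
-\log p_{\hat\theta(x^{\ell-1})}(x^{\ell-1})-\frac{d}{2}\log(\ell-1)+C_1\le\log\left|T_{x^{\ell-1}}\right|\le\gamma ,
\]
where the last inequality is exactly the second condition in (\ref{tyCmAlg}). Next I would lower bound the negative log-likelihood using the uniform probability bounds: since $\hat\theta(x^{\ell-1})\in\Theta$ by definition of $\hat\theta$ in (\ref{theThaHatEq}), the bound $p_\theta(x)\le p_{\max}$ yields
\[
-\log p_{\hat\theta(x^{\ell-1})}(x^{\ell-1})=\sum_{i=1}^{\ell-1}\left(-\log p_{\hat\theta(x^{\ell-1})}(x_i)\right)\ge(\ell-1)\log\frac{1}{p_{\max}} .
\]
Substituting into the previous display gives $(\ell-1)\log\frac{1}{p_{\max}}\le\gamma-C_1+\frac{d}{2}\log(\ell-1)$.

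Finally, since $\log\frac{1}{p_{\max}}>0$ and $\frac{d}{2}\log(\ell-1)=o(\ell)$, the logarithmic term is absorbed into a constant: there is a constant $C$ with $\frac{d}{2}\log(\ell-1)\le\frac12(\ell-1)\log\frac{1}{p_{\max}}+C$ for all $\ell\ge 2$, so $\frac12(\ell-1)\log\frac{1}{p_{\max}}\le\gamma-C_1+C$, which rearranges to $\ell\le C_3\gamma$ for a suitable $C_3>0$ (using that $\gamma$ is bounded away from $0$ for the dictionary sizes $M$ under consideration, so the additive constants are themselves $\mathcal{O}(\gamma)$). There is no genuinely hard step; the only points to be careful about are that the $-\frac{d}{2}\log\ell$ correction in Lemma \ref{TypeClSizeLem} points in the favorable direction and that $\hat\theta(\cdot)\in\Theta$ so that the uniform probability bounds apply — both are immediate.
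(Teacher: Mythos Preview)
Your proposal is correct and follows essentially the same route as the paper: both use the predecessor condition $\log|T_{{x^\ell}^{-1}}|\le\gamma$, the lower bound in Lemma~\ref{TypeClSizeLem}, and the uniform bound $p_\theta(x)\le p_{\max}$ to arrive at $(\ell-1)\log\frac{1}{p_{\max}}-\frac{d}{2}\log(\ell-1)\le\gamma-C_1$, from which the linear bound on $\ell$ follows. Your write-up simply spells out the absorption of the $\frac{d}{2}\log(\ell-1)$ term more explicitly than the paper does.
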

\begin{proof}
For any $x^{\ell}\in\mathcal{D}$, (\ref{tyCmAlg},\ref{TypeClassSize}) yield
\begin{equation*}
-\log{p_{\hat{\theta}\left(x^{{\ell}^{-1}}\right)}\left(x^{{\ell}^{-1}}\right)}-\frac{d}{2}\log{(\ell-1)}+C_1\leq\log|{T_{x^{\ell^{-1}}}}|\leq\gamma.
\end{equation*}
Since for all $\theta\in\Theta$, $p_{\theta}(x^{{\ell}^{-1}})\leq p_{\max}^{\ell-1}$, we have
\begin{equation*}
(\ell-1)\log{\frac{1}{p_{\max}}}-\frac{d}{2}\log{(\ell-1)}\leq\gamma-C_1.
\end{equation*}
The corollary then follows.
\end{proof}

The following lemma shows that one single observation does not provide much information.
\begin{Lemma}
\label{dataIncInfo}
Let $x^{\ell+1}=(x_1,\cdots, x_{\ell},x_{\ell+1})=x^{\ell}x_{\ell+1}$. There exists a constant $C_4>0$ such that
\begin{equation}
-\log{p_{\hat{\theta}\left(x^{\ell+1}\right)}\left(x^{\ell+1}\right)}-\left(-\log{p_{\hat{\theta}\left(x^{\ell}\right)}\left(x^{\ell}\right)}\right)\leq C_4.
\end{equation}
\end{Lemma}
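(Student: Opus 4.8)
The plan is to show that appending a single symbol to $x^\ell$ can only increase the maximized log-likelihood $-\log p_{\hat\theta(x^\ell)}(x^\ell)$ by a bounded amount. First I would rewrite the left-hand side in terms of the normalized sufficient statistics. Writing $\boldsymbol\tau^\ell := \boldsymbol\tau(x^\ell)$ and $\boldsymbol\tau^{\ell+1} := \boldsymbol\tau(x^{\ell+1})$, recall that $-\log p_{\hat\theta(\boldsymbol\tau)}(x^\ell) = \ell\bigl(\psi(\hat\theta(\boldsymbol\tau^\ell)) - \langle\hat\theta(\boldsymbol\tau^\ell),\boldsymbol\tau^\ell\rangle\bigr)$. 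Since $\hat\theta(\cdot)$ is by definition the maximizer, for any fixed $\theta\in\Theta$ we have $\psi(\hat\theta(\boldsymbol\tau^\ell))-\langle\hat\theta(\boldsymbol\tau^\ell),\boldsymbol\tau^\ell\rangle \le \psi(\theta)-\langle\theta,\boldsymbol\tau^\ell\rangle$; choosing $\theta=\hat\theta(\boldsymbol\tau^{\ell+1})$ gives a clean comparison between the two terms using the \emph{same} parameter, so the difference reduces to comparing $\ell\bigl(\psi(\theta)-\langle\theta,\boldsymbol\tau^\ell\rangle\bigr)$ with $(\ell+1)\bigl(\psi(\theta)-\langle\theta,\boldsymbol\tau^{\ell+1}\rangle\bigr)$ at $\theta=\hat\theta(\boldsymbol\tau^{\ell+1})$.

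Next I would expand that difference explicitly. Using $\ell\,\boldsymbol\tau^\ell = (\ell+1)\boldsymbol\tau^{\ell+1} - \boldsymbol\tau(x_{\ell+1})$, the quantity $(\ell+1)\bigl(\psi(\theta)-\langle\theta,\boldsymbol\tau^{\ell+1}\rangle\bigr) - \ell\bigl(\psi(\theta)-\langle\theta,\boldsymbol\tau^\ell\rangle\bigr)$ collapses to $\psi(\theta) - \langle\theta,\boldsymbol\tau(x_{\ell+1})\rangle = -\log p_\theta(x_{\ell+1})$. Thus the left-hand side of the lemma is bounded above by $-\log p_{\hat\theta(\boldsymbol\tau^{\ell+1})}(x_{\ell+1})$, the log-loss of a \emph{single} symbol under a distribution in the family. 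By the uniform bound $p_\theta(x)\ge p_{\min}$ for all $\theta\in\Theta$ and all $x\in\mathcal X$ (compactness of $\Theta$), this is at most $\log(1/p_{\min})$, so the lemma holds with $C_4 = \log(1/p_{\min})$.

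The only genuinely delicate point is making sure the sign inequality goes the right way — i.e. that suboptimality of $\hat\theta(\boldsymbol\tau^{\ell+1})$ for the length-$\ell$ problem gives an \emph{upper} bound on the length-$(\ell+1)$ optimum minus the length-$\ell$ optimum, not a lower bound. This works because we are subtracting the length-$\ell$ optimum (which is $\ge$ its value at any fixed $\theta$), and the telescoping identity above is an equality once $\theta$ is fixed; I would write this step carefully. A symmetric argument (plugging $\theta = \hat\theta(\boldsymbol\tau^\ell)$ into the length-$(\ell+1)$ problem) would similarly yield a matching lower bound of the same order, $-\log p_{\hat\theta(\boldsymbol\tau^\ell)}(x_{\ell+1}) \ge \log(1/p_{\max})$ in absolute value, though only the upper bound is needed here. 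No appeal to the quantized-type machinery or to Lemma~\ref{TypeClSizeLem} is required; everything follows from convexity of $\psi$ and the uniform probability bounds.
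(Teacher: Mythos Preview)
Your overall strategy is exactly the paper's: write the left-hand side as a difference of two optimized quantities, replace one optimizer by the other so both terms share the same $\theta$, telescope, and bound the resulting single-symbol term. Your final identification of the residual as $-\log p_\theta(x_{\ell+1})\le\log(1/p_{\min})$ is in fact sharper and cleaner than what the paper writes (the paper stops at ``$\psi$ is bounded on compact $\Theta$ and $|\boldsymbol\tau(x^\ell)-\boldsymbol\tau(x^{\ell+1})|\le C/\ell$'' rather than fully collapsing the telescope).

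However, you have the direction of the substitution reversed, and this is precisely the ``delicate point'' you flagged. With $f_\ell(\theta)=\ell\bigl(\psi(\theta)-\langle\theta,\boldsymbol\tau^\ell\rangle\bigr)$, the quantity to bound is $\min_\theta f_{\ell+1}(\theta)-\min_\theta f_\ell(\theta)$. Your inequality $f_\ell(\hat\theta(\boldsymbol\tau^\ell))\le f_\ell(\theta)$ is correct, but since $f_\ell(\hat\theta(\boldsymbol\tau^\ell))$ enters the lemma with a \emph{minus} sign, replacing it by the larger $f_\ell(\hat\theta(\boldsymbol\tau^{\ell+1}))$ produces a \emph{lower} bound on the difference, namely
\[
-\log p_{\hat\theta(x^{\ell+1})}(x^{\ell+1})+\log p_{\hat\theta(x^{\ell})}(x^{\ell})\ \ge\ -\log p_{\hat\theta(\boldsymbol\tau^{\ell+1})}(x_{\ell+1}),
\]
not the upper bound the lemma requires. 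The correct move is the one you relegated to the ``symmetric argument'': use $\min_\theta f_{\ell+1}(\theta)\le f_{\ell+1}\bigl(\hat\theta(\boldsymbol\tau^\ell)\bigr)$, which gives
\[
-\log p_{\hat\theta(x^{\ell+1})}(x^{\ell+1})+\log p_{\hat\theta(x^{\ell})}(x^{\ell})\ \le\ -\log p_{\hat\theta(\boldsymbol\tau^{\ell})}(x_{\ell+1})\ \le\ \log\frac{1}{p_{\min}}.
\]
So your two substitutions are swapped: plugging $\hat\theta(\boldsymbol\tau^\ell)$ into the $(\ell{+}1)$-problem yields the upper bound, and plugging $\hat\theta(\boldsymbol\tau^{\ell+1})$ into the $\ell$-problem yields the lower bound. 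With that single swap the proof is complete and, as noted, slightly tidier than the paper's.
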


\begin{proof}
We have
\begin{align}
&-\log{p_{\hat{\theta}\left(x^{\ell+1}\right)}\left(x^{\ell+1}\right)}-\left(-\log{p_{\hat{\theta}\left(x^{\ell}\right)}\left(x^{\ell}\right)}\right) = \nonumber\\
 &\hspace{1in} \max_{\theta}\left[(\ell+1)\left(\psi(\theta)-\langle\theta,\boldsymbol{\tau}(x^{\ell+1})\rangle\right)\right] -\max_{\theta}\left[\ell\left(\psi(\theta)-\langle\theta,\boldsymbol{\tau}(x^{\ell})\rangle\right)\right] \label{firstLineEqq}\\
 &\hspace{1in}\leq\max_{\theta}\left[(\ell+1)\psi(\theta)-(\ell+1)\langle \theta,\boldsymbol{\tau}(x^{\ell+1})\rangle-\ell\psi(\theta)+\ell\langle\theta,\boldsymbol{\tau}(x^{\ell})\rangle\right] \label{secondLineEqq}\\
 &\hspace{1in} \leq C_4 \label{lastLLLine}
\end{align}
where (\ref{firstLineEqq}) is from the definition (\ref{theThaHatEq}), (\ref{secondLineEqq}) exploits the fact that for any two functions $g_1(\theta),g_2(\theta)$
\begin{equation*}
\max_{\theta}g_1(\theta)-\max_{\theta}g_2(\theta)\leq \max_{\theta}\Big((g_1-g_2)(\theta)\Big),
\end{equation*}
and finally (\ref{lastLLLine}) follows from $|\boldsymbol{\tau}(x^{\ell})-\boldsymbol{\tau}(x^{\ell+1})|\leq \frac{C}{\ell}$ for some constant $C$ along with the fact that $\psi(\theta)$ is a continuous function over a compact domain and hence is bounded.
\end{proof}

We appeal to the following normal approximation result from \cite{verdulossless,saito}, in order to bound the percentiles of the type class size in the achievability proof.
\begin{Lemma}[Asymptotic Normality of Information]\cite{verdulossless,saito}\label{maxLikeBerr}
Fix a positive constant $\alpha>0$. For a stationary memoryless source, there exists a finite positive constant $A>0$, such that for all $\ell\geq 1$ and $z$ with $|z|\leq \alpha$,
\begin{equation}
\left|\mathbb{P}_{\theta^*}\left(\frac{-\log{p_{\theta^*}(X^{\ell})}-\ell H}{\sigma\sqrt{\ell}}>z\right)-Q(z)\right|\leq \frac{A}{\sqrt{\ell}}
\end{equation}
where $H:=H(p_{\theta^*})$ and $\sigma^2:=\sigma^2(p_{\theta^*})$, are the entropy and the varentropy of the true model $p_{\theta^*}$, respectively.
\end{Lemma}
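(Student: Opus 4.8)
The plan is to recognize the random variable $-\log p_{\theta^*}(X^\ell)$ as a standardized i.i.d. sum and invoke the classical Berry--Esseen theorem. First I would write $-\log p_{\theta^*}(X^\ell)=\sum_{i=1}^{\ell}Y_i$, where $Y_i:=-\log p_{\theta^*}(X_i)$. Since $X^\ell$ is drawn i.i.d.\ from $p_{\theta^*}$, the $Y_i$ are i.i.d., and by the definitions in Section~\ref{sec::MainRwes} we have $\mathbb{E}_{\theta^*}[Y_i]=H$ and $\mathbb{V}_{\theta^*}(Y_i)=\sigma^2$. The key structural fact is that $\mathcal{X}$ is finite and $p_{\min}\le p_{\theta^*}(x)\le p_{\max}$, so $\log\frac{1}{p_{\max}}\le Y_i\le\log\frac{1}{p_{\min}}$ almost surely; in particular the third absolute central moment $\rho:=\mathbb{E}_{\theta^*}\big|Y_i-H\big|^3$ is finite (bounded by $\big(\log\frac{1}{p_{\min}}\big)^3$, and by compactness of $\Theta$ one may take this bound uniform over the model class).

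Next I would apply the Berry--Esseen theorem to $S_\ell:=\sum_{i=1}^{\ell}Y_i$: there is an absolute constant $C_{\mathrm{BE}}$ with
\[
\sup_{z\in\mathbb{R}}\left|\,\mathbb{P}_{\theta^*}\!\left(\frac{S_\ell-\ell H}{\sigma\sqrt{\ell}}\le z\right)-\Phi(z)\right|\le\frac{C_{\mathrm{BE}}\,\rho}{\sigma^3\sqrt{\ell}},
\]
where $\Phi=1-Q$ is the standard normal CDF. Passing from the event $\{\cdot\le z\}$ to $\{\cdot>z\}$ only changes both quantities by a common sign inside the absolute value (since $\mathbb{P}(\cdot>z)=1-\mathbb{P}(\cdot\le z)$ and $Q(z)=1-\Phi(z)$), so the same bound holds with $\mathbb{P}_{\theta^*}\big(\frac{-\log p_{\theta^*}(X^\ell)-\ell H}{\sigma\sqrt{\ell}}>z\big)$ in place of the CDF and $Q(z)$ in place of $\Phi(z)$. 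Setting $A:=\max\{C_{\mathrm{BE}}\rho/\sigma^3,\,1\}$ absorbs the small-$\ell$ range, because whenever $A/\sqrt{\ell}\ge 1$ the claimed inequality is automatic (its left-hand side is a difference of two numbers in $[0,1]$). This gives the bound for all $\ell\ge 1$ and, a fortiori, for all $z$ with $|z|\le\alpha$; note the restriction $|z|\le\alpha$ is not actually needed at this accuracy, and is only meaningful if one wants the sharper Edgeworth-type remainder implicit in \cite{verdulossless,saito}.

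There is no substantive obstacle: the statement is a textbook consequence of Berry--Esseen once the i.i.d.\ decomposition is observed, and finite-alphabet boundedness supplies the finite third moment needed to make $A$ explicit. The only point requiring care is non-degeneracy: if $\sigma^2=0$ then $Y_i$ is almost surely constant, the normalization $\sigma\sqrt{\ell}$ vanishes, and the statement is vacuous; I would note that this corresponds to the uniform distribution on $\mathcal{X}$ (equivalently, it is ruled out by positive-definiteness of $\boldsymbol{\nabla}^2\psi$ in Section~\ref{sec::Prelim} together with $\theta^*\neq 0$), and that for the non-degenerate sources under consideration $\sigma^2$ is bounded away from $0$, so $A$ is a legitimate finite constant, uniform over $\Theta$ by compactness.
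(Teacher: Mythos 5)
Your proposal is correct: the paper does not prove this lemma at all but simply imports it from \cite{verdulossless,saito}, and your Berry--Esseen derivation (i.i.d.\ decomposition of $-\log p_{\theta^*}(X^\ell)$ into bounded summands, finite third absolute moment from $p_{\min}\le p_{\theta^*}(x)\le p_{\max}$, passage from CDF to tail, and absorbing small $\ell$ into the constant) is exactly the standard argument underlying the cited results, so in substance you have supplied the proof the paper delegates to its references. The only point worth flagging is the non-degeneracy issue you already raise: the lemma is vacuous when $\sigma^2=0$, and the paper implicitly assumes $\sigma>0$ for the true model (your observation that $\sigma^2=\theta^{*T}\boldsymbol{\nabla}^2\psi(\theta^*)\theta^*>0$ whenever $\theta^*\neq 0$ is a clean way to see when this holds); also note that uniformity of $A$ over $\Theta$, while nice, is not needed for the statement as written since $\theta^*$ is fixed, so you need not insist that $\sigma^2$ be bounded away from zero over the whole parameter set.
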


\section{Achievability}
\label{sec::Achiev}
\subsection{Threshold Design}
\label{sec::SizeEnf}
Setting high threshold values of $\gamma$ in (\ref{tyCmAlg}), results in compressing more information into a fixed budget of output bits. On the other hand, in order to keep the dictionary size below the pre-specified size $M$, $\gamma$ cannot be set too high. In this subsection, we characterize the largest value of $\gamma$ for which the resulting dictionary size is below $M$.

Let $N_{\ell+1}$ be the number of dictionary segments with length $\ell+1$. For any $x^{\ell+1}\in\mathcal{D}$, it must certainly hold that $\log{|T_{{x^{\ell+1}}}|}>\gamma$ and $\log{|T_{{x^{\ell+1}}^{-1}}|}\leq\gamma$. Let
\begin{equation}
\mathcal{A}= \{T\in\mathcal{T}_{\ell}: \log{|T|}\leq \gamma \mbox{ and }\exists\: x^{\ell}\in T \mbox{ and }  x_{\ell+1}\in\mathcal{X} \mbox{ with }\log|{T_{x^{\ell}x_{\ell+1}}}|>\gamma\}.
\end{equation}
Motivated by \cite[Eq. 3.12]{merhavVFvsFV}, we upper bound $N_{\ell+1}$ as follows:
\begin{align}
N_{\ell+1} & \leq |\mathcal{X}|\sum_{T\in\mathcal{A}}{|T|} \nonumber\\
            &\leq |\mathcal{X}|2^{\gamma} |\mathcal{A}|. \label{setCombis}
\end{align}
We show in Appendix \ref{app::SizeAi} that $|\mathcal{A}|\leq\ell^{d-1}$. Hence
\begin{equation}
N_{\ell+1}\leq |\mathcal{X}|2^{\gamma}\ell^{d-1}. \label{eNSizeUJp}
\end{equation}
We then upper bound the dictionary size as follows:
\begin{align}
|\mathcal{D}|&= \sum_{\ell=0}^{C_3\gamma} N_{\ell+1}  \label{firstLineEq}\\
&\leq |\mathcal{X}|2^{\gamma}\sum_{\ell=0}^{C_3\gamma}{\ell^{d-1}} \label{secondLine}\\
                                   &\leq C2^{\gamma}\gamma^{d} \label{cnoRel}
\end{align}
where (\ref{firstLineEq}) is from (\ref{LinearEll}), (\ref{secondLine}) follows from (\ref{eNSizeUJp}), and (\ref{cnoRel}) is a consequence of upper bounding the summation with an integral, where $C>0$ is a generic constant whose precise value is irrelevant. Finally, to ensure that the dictionary of the quantized Type Complexity code (\ref{tyCmAlg}) does not contain more than $M$ segments, it suffices to set $\gamma$ such that
\begin{equation}
\label{notBigEmEq}
\log{C}+\gamma+d\log{\gamma}\leq\log{M}.
\end{equation}
One can show that, there exists a positive constant $C>0$, such that the following choice of $\gamma$, satisfies (\ref{notBigEmEq}) and moreover the leading two terms are the largest possible:
\begin{equation}
\label{gammaValued}
\gamma = \log{M}-d\log\log{M}-C.
\end{equation}
\subsection{Coding Rate Analysis}
In this subsection, we derive an upper bound for the $\epsilon$-coding rate of the quantized type implementation of the TC code. To this end, we upper bound the overflow probability as follows:
\begin{align}
\mathbb{P}\left(\frac{\log{M}}{\ell(X^*)}>R\right)&= \mathbb{P}\left(\ell(X^*)<\frac{\log{M}}{R}\right) \nonumber \\
                                          &=\mathbb{P}\left(\exists \ell<\frac{\log{M}}{R}:\log{|T_{X^{\ell}}|}> \gamma\right) \label{yek}\\
                                          &\leq \mathbb{P}\left(\log{\left|T_{X^{\frac{\log{M}}{R}}}\right|}>\gamma\right) \label{thisNewF}\\
                                          &\leq\mathbb{P}\left(-\log{p_{\hat{\theta}\left(X^{\frac{\log{M}}{R}}\right)}\left(X^{\frac{\log{M}}{R}}\right)}> \gamma+\frac{d}{2}\log{\frac{\log{M}}{R}}-C_2\right) \label{do}\\
                                          &\leq \mathbb{P}\left(-\log{p_{\theta^*}\left(X^{\frac{\log{M}}{R}}\right)}> \gamma+\frac{d}{2}\log{\frac{\log{M}}{R}}-C_2\right)\label{se}\\
                                          &=\mathbb{P}\left(\frac{-\log{p_{\theta^*}\left(X^{\frac{\log{M}}{R}}\right)}-\frac{\log M}{R}H}{\sigma\sqrt{\frac{\log M}{R}}}> \frac{\gamma+\frac{d}{2}\log{\frac{\log{M}}{R}}-C_2-\frac{\log M}{R}H}{\sigma\sqrt{\frac{\log M}{R}}}\right)\nonumber \\
                                          &\leq Q\left(\frac{\gamma+\frac{d}{2}\log{\frac{\log{M}}{R}}-C_2-\frac{\log M}{R}H}{\sigma\sqrt{\frac{\log M}{R}}}\right)+\frac{A}{\sqrt{\frac{\log M}{R}}} \label{lastLineOverFl}
\end{align}
where (\ref{yek}) is from the condition for segment $x^\ell$ to be in the dictionary in (\ref{tyCmAlg}), (\ref{thisNewF}) holds since for $x^{\ell}$ a prefix of $x^{\ell'}$, $|T_{x^{\ell}}|\leq |T_{x^{\ell'}}|$ and furthermore we assume that $\frac{\log{M}}{R}$ is an integer, (\ref{do}) is from the quantized type class size bound in Lemma \ref{TypeClSizeLem}, (\ref{se}) is from $p_{\theta^*}(x^{\ell})\leq p_{\hat{\theta}(x^{\ell})}(x^{\ell})$, and finally (\ref{lastLineOverFl}) is an application of Lemma \ref{maxLikeBerr}. In Appendix \ref{sec::solvRAschv}, we show that for the rate $R$ specified below, (\ref{lastLineOverFl}) and subsequently the overflow probability falls below $\epsilon$:
\begin{equation}
R=H+\sigma\sqrt{\frac{H}{\log{M}}}Q^{-1}(\epsilon)+H\frac{d}{2}\frac{\log\log M}{\log M}+\mathcal{O}\left(\frac{1}{\log{M}}\right).
\end{equation}
Due to the definition of $\epsilon$-coding rate, $R_M(\epsilon,\phi,p_{\theta^*})\leq R$. This completes the achievability proof.
\section{Converse}
\label{sec::Converse}
We first introduce notations relevant to the F-V length codes. Recall that any F-V length prefix code $\phi^{\texttt{FV}}$ is a mapping from a set of words $\mathcal{W}_n$, the set of all sequences of fixed input length $n$ over the alphabet $\mathcal{X}$, to variable length binary sequences. For an infinite length sequence $X^{\infty}$ emitted from the source, we adopt a one-shot setting and let
\begin{equation}
 \ell\left(\phi^{\texttt{FV}}\left(X^{\infty}\right)\right) :=\ell\left(\phi^{\texttt{FV}}\left(X_0^n\right)\right) \label{fvLengthIs}
\end{equation}
where $X_0^n\in\mathcal{W}_n$ is the prefix of $X^{\infty}$ within the set of words. For simplicity of notation, we denote $\ell^{\texttt{FV}}(X^{\infty}):= \ell(\phi^{\texttt{FV}}(X^{\infty}))$.

Let $\phi^{\texttt{VF}}$ be an arbitrary V-F length code with $M$ dictionary segments and length function $\ell^{\texttt{VF}}(\cdot)$ defined as in (\ref{OneShotELnmgthEq}). Let $R$ be any achievable $\epsilon$-coding rate for $\phi^{\texttt{VF}}$. We show that
\begin{equation}
\label{converseEq}
R\geq H+\sigma\sqrt{\frac{H}{\log{M}}}Q^{-1}(\epsilon)+H\frac{d}{2}\frac{\log\log M}{\log M}-C\frac{\log\log\log M}{\log M}.
\end{equation}

Assume $\log{M}$ and $\frac{\log{M}}{R}$ are integers. This assumption does not hurt generality of our result. It is shown in \cite{merhavVFvsFV} that for any V-F length code $\phi^{\texttt{VF}}$ with $M$ dictionary segments and length function $\ell^{\texttt{VF}}(\cdot)$, one can construct a F-V length prefix code $\phi^{\texttt{FV}}$ with $|\mathcal{X}|^\frac{\log{M}}{R}$ codewords (i.e. fixed input length of $\frac{\log{M}}{R}$) and length function $\ell^{\texttt{FV}}(\cdot)$, such that the event $\left\{\ell^{\texttt{VF}}(X^{\infty})<\frac{\log{M}}{R}\right\}$ for $\phi^{\texttt{VF}}$ is equivalent to the event $\left\{\ell^{\texttt{FV}}(X^{\infty})>\log{M}\right\}$ for $\phi^{\texttt{FV}}$. Their construction goes as follows:
\begin{itemize}
\item \textbf{Step 1:} Consider the complete $|\mathcal{X}|$-ary tree with $M$ leaves corresponding to the complete and proper V-F length code. All the dictionary segments of length greater than $\frac{\log{M}}{R}$, are shortened to $\frac{\log{M}}{R}$ letters, by pruning all subtrees with roots at depth $\frac{\log{M}}{R}$. Therefore, all the leaves (i.e. segments) of the modified tree have length at most $\frac{\log{M}}{R}$, and moreover the probability $\mathbb{P}\left(\ell^{\texttt{VF}}(X^{\infty})<\frac{\log{M}}{R}\right)$ of the modified tree is equal to that of the original tree.
\item \textbf{Step 2:} Every segment $x^*$ of the modified tree with length $\ell(x^*)<\frac{\log{M}}{R}$ is extended to $\frac{\log{M}}{R}$ by all $|\mathcal{X}|^{\frac{\log{M}}{R}-\ell(x^*)}$ possible suffixes, and accordingly, the $\log{M}$-bit codeword for this segment is also extended by all possible $\left(\left(\frac{\log{M}}{R}-\ell(x^*)\right)\left\lceil\log{|\mathcal{X}|}\right\rceil\right)$-bit suffixes. This results in a F-V length code with fixed input-length $\frac{\log{M}}{R}$ and length function $\ell^{\texttt{FV}}(\cdot)$ satisfying the required properties.
\end{itemize}

Therefore we have
\begin{align}
\mathbb{P}\left(\frac{\log M}{\ell^{\texttt{VF}}(X^{\infty})}>R\right) &= \mathbb{P}\left(\ell^{\texttt{VF}}(X^{\infty})<\frac{\log M}{R}\right) \nonumber \\
                                                    &=\mathbb{P}\left(\ell^{\texttt{FV}}(X^{\infty})>\log{M}\right) \nonumber \\
                                                    &=\mathbb{P}\left(\frac{\ell^{\texttt{FV}}(X^{\infty})}{\frac{\log{M}}{R}}>R\right). \label{lessEpsilonFVVF}
\end{align}
Since $R$ is $\epsilon$-achievable for $\phi^{\texttt{VF}}$, therefore $\mathbb{P}\left(\frac{\log M}{\ell^{\texttt{VF}}(X^{\infty})}>R\right)\leq \epsilon$ and hence (\ref{lessEpsilonFVVF}) implies
\begin{equation}
\label{ovFRelEq}
\mathbb{P}\left(\frac{\ell^{\texttt{FV}}(X^{\infty})}{\frac{\log{M}}{R}}>R\right)\leq \epsilon.
\end{equation}
Define the $\epsilon$-coding rate $R(\phi^{\texttt{FV}},\epsilon,p)$ of the F-V length code $\phi^{\texttt{FV}}$ as \cite[Eq. 9]{kosutJournal}
\begin{equation*}
R(\phi^{\texttt{FV}},\epsilon,p)=\min\left\{R_0:\mathbb{P}\left(\frac{\ell^{\texttt{FV}}(X^{\infty})}{\frac{\log{M}}{R}}>R_0\right)\leq \epsilon\right\}.
\end{equation*}
Note that the fixed input length of $\phi^{\texttt{FV}}$ is $\frac{\log{M}}{R}$. Therefore, (\ref{ovFRelEq}) implies
\begin{equation}
\label{RgreterthanRFV}
R\geq R(\phi^{\texttt{FV}},\epsilon,p).
\end{equation}
The converse for fixed-to-variable length prefix codes \cite[Theorem 15]{kosutJournal}, in turn implies\footnote{The result in \cite{kosutJournal} is stated for the class of all memoryless sources. However, adapting their proof for the exponential family is straightforward.}
\begin{equation}
\label{converseFVPRef}
R(\phi^{\texttt{FV}},\epsilon,p)\geq H+\frac{\sigma}{\sqrt{\frac{\log{M}}{R}}}Q^{-1}(\epsilon)+\frac{d}{2}\frac{\log{\frac{\log{M}}{R}}}{\frac{\log{M}}{R}}-\mathcal{O}\left(\frac{\log\log{\frac{\log M}{R}}}{\frac{\log{M}}{R}}\right).
\end{equation}
Combining (\ref{RgreterthanRFV},\ref{converseFVPRef}) yields
\begin{equation}
\label{thisEWEq}
R\geq H+\frac{\sigma}{\sqrt{\frac{\log{M}}{R}}}Q^{-1}(\epsilon)+\frac{d}{2}\frac{\log{\frac{\log{M}}{R}}}{\frac{\log{M}}{R}}-C\left(\frac{\log\log{\frac{\log M}{R}}}{\frac{\log{M}}{R}}\right)
\end{equation}
where $C$ is a constant. Through a similar iterative approach as in Appendix \ref{sec::solvRAschv}, one can show that (\ref{thisEWEq}) leads to (\ref{converseEq}).

\section{Conclusion}
\label{sec::cncld}
We derived the fundamental limits of universal variable-to-fixed length coding of $d$-dimensional exponential families of distributions in the fine asymptotic regime, where the law of large numbers may not hold. We proposed the Type Complexity code and further showed that the quantized type implementation of the Type Complexity code achieves the optimal third-order coding rate. Studying the behavior of the non-proper codes is an interesting future direction.

{\bibliographystyle{IEEEtran}
\bibliography{reputation}}

\begin{appendices}
\section{Proof of $|\mathcal{A}|\leq \ell^{d-1}$}
\label{app::SizeAi}
The type class size bounds in Lemma \ref{TypeClSizeLem} implies the following subset relationships
\begin{equation}
\left\{T\in\mathcal{T}_{\ell}:\log{|T|}\leq\gamma\right\} \subseteq \left\{T\in\mathcal{T}_{\ell}: \exists x^{\ell}\in T \mbox{ with } -\log{p_{\hat{\theta}(x^{\ell})}(x^{\ell})}+C_1\leq\gamma+\frac{d}{2}\log{\ell}\right\} \label{firstSubset}
\end{equation}
and
\begin{align}
&\left\{T\in\mathcal{T}_{\ell}:\exists\: x^{\ell}\in T \mbox{ and } x_{\ell+1}\in\mathcal{X} \mbox{ with }\log{|T_{x^{\ell}x_{\ell+1}}|>\gamma}\right\}\nonumber \\ &\hspace{0.5in}\subseteq \left\{T\in\mathcal{T}_{\ell}:\exists\: x^{\ell}\in T \mbox{ and } x_{\ell+1}\in\mathcal{X} \mbox{ with } -\log{p_{\hat{\theta}(x^{\ell}x_{\ell+1})}(x^{\ell}x_{\ell+1})}+C_2>\gamma+\frac{d}{2}\log{\ell}\right\}. \label{secondSubset}
\end{align}

Hence Lemma \ref{dataIncInfo}, along with (\ref{firstSubset},\ref{secondSubset}) and the definition of $\mathcal{A}$, imply
\begin{align*}
\mathcal{A}\subseteq \left\{T\in\mathcal{T}_{\ell}: \exists x^{\ell}\in T \mbox{ with }\gamma+\frac{d}{2}\log{\ell}-C_2-C_4<-\log{p_{\hat{\theta}(x^{\ell})}(x^{\ell})}<\gamma+\frac{d}{2}\log{\ell}-C_1\right\}.
\end{align*}
On the other hand it is shown in \cite[Eq. 32]{nematArxiv} that
\begin{equation*}
\left|\left\{T\in\mathcal{T}_{\ell}: \exists x^{\ell}\in T \mbox{ with }\gamma+\frac{d}{2}\log{\ell}-C_2-C_4<-\log{p_{\hat{\theta}(x^{\ell})}(x^{\ell})}<\gamma+\frac{d}{2}\log{\ell}-C_1\right\}\right|\leq\ell^{d-1}.
\end{equation*}
This completes the proof.

\section{Achievable $\epsilon$-coding Rate}
\label{sec::solvRAschv}
In order for (\ref{lastLineOverFl}) to be less than or equal to $\epsilon$, it must hold that
\begin{equation*}
\gamma-\frac{\log{M}}{R}H+\frac{d}{2}\log{\frac{\log{M}}{R}}-C_2\leq \sigma\sqrt{\frac{\log{M}}{R}}Q^{-1}\left(\epsilon-\frac{A}{\sqrt{\frac{\log{M}}{R}}}\right).
\end{equation*}
Recalling the designed value for $\gamma$ in (\ref{gammaValued}) along with the Taylor expansion of $Q^{-1}(\cdot)$ around $\epsilon$ yield
\begin{equation}
\label{thisljEq}
R\leq H +\sigma\sqrt{\frac{R}{\log{M}}}Q^{-1}(\epsilon)+R\frac{d}{2}\frac{\log\log{M}}{\log{M}}+\frac{C}{\log M}
\end{equation}
for some constant $C$. Define $R^*$ as the largest $R$ satisfying (\ref{thisljEq}). We then solve iteratively for $R^*$. For large enough $M$, one can show that $R^*= H+\delta_1$, where $\delta_1=o(1)$. Substituting $R^*$ in (\ref{thisljEq}) and cancelling $H$ from the left and right side of (\ref{thisljEq}), one can show that $\delta_1= \sigma\sqrt{\frac{H}{\log{M}}}Q^{-1}(\epsilon)+\delta_2$, where $\delta_2=o\left(\frac{1}{\sqrt{\log{M}}}\right)$ and the Taylor expansion of $\sqrt{H+\delta_1}$ around $H$ is employed. Finally, substituting $R^*= H+\sigma\sqrt{\frac{H}{\log{M}}}Q^{-1}(\epsilon)+\delta_2$ in (\ref{thisljEq}) and cancelling the $H$ and  $\sigma\sqrt{\frac{H}{\log{M}}}Q^{-1}(\epsilon)$ terms from the left and right side of (\ref{thisljEq}), one can show that $\delta_2= \frac{d}{2}H\frac{\log\log{M}}{\log{M}}+\delta_3$, where $\delta_3=\mathcal{O}\left(\frac{1}{\log{M}}\right)$.
\end{appendices}

\end{document}